\documentclass[]{amsart}
\usepackage{amsfonts}
\usepackage{amsmath}
\usepackage{amssymb}
\usepackage{graphicx}
\usepackage{float}
\usepackage{hyperref}
\usepackage{geometry}
\geometry{a4paper, margin=33mm}
\newtheorem{theorem}{Theorem}[section]
\newtheorem{proposition}[theorem]{Proposition}
\newtheorem{remark}[theorem]{Remark}
\newcommand{\sint}{\textstyle{\int}}  
%
\newcommand{\C}{\mathbb{C}}

\newcommand{\R}{\mathbb{R}}
\newcommand{\Z}{\mathbb{Z}}

\newcommand{\I}{\mathrm{i}}
\renewcommand{\P}{\mathbb{P}}

\newcommand{\lp}{\left(}
\newcommand{\rp}{\right)}

\newcommand{\diag}{{\rm diag}}
\newcommand{\Cg}{\mathbb{C}^g}
\newcommand{\siegel}{\mathbb{H}^g}
\newcommand{\Zg}{\mathbb{Z}^g}
\newcommand{\Rg}{\mathbb{R}^g}
\newcommand{\Hg}{\mathbb{H}^g}

\newcommand{\ernst}{\mathcal{E}}

\newcommand{\thetapq}{\Theta_{\mathbf{pq}}}
\newcommand{\vp}{\mathbf{p}}
\newcommand{\vq}{\mathbf{q}}
\newcommand{\vz}{\mathbf{z}}
\newcommand{\vn}{\mathbf{n}}
\newcommand{\vm}{\mathbf{m}}
\newcommand{\B}{\mathbb{B}}
\renewcommand{\L}{\mathcal{L}}

\begin{document}
\title{On a class of algebro-geometric solutions to the Ernst equation}
\author{Eddy B. de Leon}
    \email{Eddy-Brandon.De-Leon-Aguilar@u-bourgogne.fr}
\address{Institut de Math\'ematiques de Bourgogne,
		Universit\'e de Bourgogne, 9 avenue Alain Savary, 21078 Dijon
		Cedex, France}
\date{\today} 
\thanks{This work was partially supported by the EIPHI Graduate School (contract ANR-17-EURE-0002)the Bourgogne
Franche-Comt\'e Region, the European fund FEDER, and by the European Union Horizon 2020 research and innovation program under the Marie Sklodowska-Curie RISE 2017 grant agreement no. 778010 IPaDEGAN. The author thanks D. Korotkin for helpful discussions and hints.}
\begin{abstract}
     We discuss a class of solutions to the Ernst equation in terms 
	of theta functions with characteristics. We 
	show that it is necessary to take into account a phase factor, which arises from a shift by a lattice vector, and impose conditions on the characteristics of the theta functions in order for the presented function to solve the Ernst equation for all the considered parameters. 
\end{abstract}

\maketitle

\section{Introduction} 
The theory of multi-dimensional theta functions has proven to be 
useful in many domains of mathematics and physics. In the theory of 
integrable systems, they provide quasi-periodic solutions to nonlinear 
evolution equations such as the Korteweg-de Vries (KdV) and nonlinear 
Schr\"odinger equations, see \cite{BB} for a historic account and 
many references. Similar applications in general relativity have 
different properties. Although the Einstein field equations are in 
general non-integrable, they are if the spacetime under study has at 
least two commuting Killing vectors (symmetries). In particular, the 
vacuum stationary axisymmetric Einstein equations in Weyl coordinates can be 
reduced to the integrable Ernst equation  \cite{E}, which is 
\begin{equation} \label{ernst_eq_gen}
    \Re(\ernst) \Delta\ernst = (\nabla \ernst)^2,
\end{equation}
where the Laplacian $\Delta$ and the gradient $\nabla$ are the usual 
operators in cylindrical coordinates $(\rho,\zeta,\varphi)$ and since we consider axisymmetric spacetimes, the Ernst potential $\ernst$ 
does not depend on the azimuthal coordinate $\varphi$.
This means that solutions of the Einstein equations  with the aforementioned symmetries can be constructed via quadratures of the solutions of the Ernst equation.
This approach allows the description of various rotating spacetimes, the most prominent of which is the Kerr spacetime which is interpreted as a rotating black hole (see \cite{KR} for further discussions), as well as counter-rotating dust disks \cite{NM, KR}.
 
A large class of solutions to the Ernst equation was originally given by Korotkin \cite{K} in 
terms of multi-dimensional Riemann theta functions.
Alternatively, 
these solutions can be expressed in terms of theta functions with 
characteristics, see \cite{KKS}. These solutions are constructed over a family of genus-$g$ hyperelliptic 
curves (the precise definitions are 
given in Section \ref{SP}). Namely, the dependence of these solutions 
on the physical coordinates is via the modular dependence of the 
periods of the Riemann surface. Thus, the physical coordinates do not 
enter directly the argument of the theta functions as is the case for 
evolution equations, such as the
KdV and Kadomtsev-Petviashvili equations which are 
given on a Riemann surface independent of the physical coordinates. 
In the latter case, the solutions show quasi-periodicity properties, 
whereas this is not the case for
solutions to the Ernst equation. 
Both the dust disk solution and the Kerr solution are constructed on a family of hyperelliptic curves of genus $2$, but the Kerr solution is obtained as a limiting case, the so-called solitonic limit \cite{KR}.

It is the purpose of this note to show that in terms of the complex variable $\xi=\zeta+\I\rho$ (where $\zeta$ and $\rho$ are the physical coordinates), the potential
\begin{equation} \label{Ernst_pot0}
    \ernst(\xi,\bar{\xi}) = e^{-\pi\I \sum_j p_j } \frac{\thetapq\left( \sint_\xi^{\infty^+},\B_\xi \right)}{\thetapq\left( \sint_\xi^{\infty^-},\B_\xi \right)},
\end{equation}
solves the Ernst equation (thus an Ernst potential), where $\sint_\xi^{\infty^\pm}$ and $\B_\xi$ are quantities parametrized by $\xi$ and $\thetapq$ is the multi-dimensional theta function \eqref{theta_pq} with fixed arbitrary characteristics $\vp\in\Rg$ and $\vq\in\Cg$, whose components satisfy the reality conditions 
\begin{equation} \label{reality}
    \Re(q_j) = \left\{ \begin{array}{ll}
      \frac{1}{2} \sum_{k\neq j} p_k, & \text{if } E_j=\bar{F}_j, \\
    -\frac{1}{4}+\frac{1}{2} \sum_k p_k, & \text{if } E_j, F_j\in\R,
    \end{array} \right.
\end{equation}
where $E_j$, $F_j$ are the branch points of the defining hyperelliptic curves \eqref{hyper}. These terms are properly defined in Section \ref{SP}. There are two differences with respect to the potential presented in \cite{KKS}. First, it is necessary to include a phase factor, which arises upon shifting the argument of the theta functions by a lattice vector when the characteristic $\vp$ is different from zero. Second, the reality conditions on the characteristics had to be modified to \eqref{reality}. However, the class of solutions \eqref{Ernst_pot0} coincides with the one presented in \cite{KKS} when $\sum_j p_j$ is an integer and all the branch cuts are of the form $E_j=\bar{F}_j$.

This note is organized as follows. In Section \ref{SP}, we recall 
some basic facts regarding the Ernst equation, hyperelliptic curves 
and theta functions, as well as Fay's identity. In Section \ref{SEE}, 
we show that the potential \eqref{Ernst_pot0} solves the Ernst 
equation by using the approach presented in \cite{KKS}.

\section{Preliminaries} \label{SP}
\subsection{Ernst equation}
The line element of a stationary axisymmetric vacuum spacetime in the 
Weyl-Lewis-Papapetrou form reads
\begin{equation}
    ds^2 = -f(dt+Ad\varphi)^2+f^{-1}\left[e^{2k}(d\zeta^2+d\rho^2)+\rho^2 d\varphi^2 \right],
\end{equation}
where $k=k(\rho,\zeta)$, $A=A(\rho,\zeta)$, $f=f(\rho,\zeta)$. The Einstein equations in these coordinates are equivalent to the Ernst equation together with the relation $f=\Re(\ernst)$ and the quadratures of
\begin{align}
  A_\xi = 2\rho \frac{(\ernst-\overline{\ernst})_\xi}{(\ernst+\overline{\ernst})^2}, \quad k_\xi = 2\I\rho \frac{\ernst_\xi \ernst_{\bar{\xi}}}{(\ernst+\overline{\ernst})^2}.
\end{align}

It is convenient to consider the Ernst equation in the complex coordinates, in which it takes the form
\begin{equation} \label{ernst_eq}
    \lp \ernst+\overline{\ernst} \rp \lp \ernst_{\xi\bar{\xi}} - \frac{1}{2(\bar{\xi}-\xi)} \left( \ernst_{\bar{\xi}} - \ernst_{\xi} \right) \rp = 2 \ernst_{\xi} \ernst_{\bar{\xi}} ,
\end{equation}

\subsection{Hyperelliptic curves}

Recall that the period matrix $\B$ of any smooth algebraic curve $\L$ is 
defined as the matrix with components $\B_{jk}=\sint_{b_j}\omega_j$, 
where $\{a_1,b_1,...,a_g,b_g\}$ is a canonical basis of the first 
homology group and $\{ \omega_1,...,\omega_g\}$ is a basis of 
holomorphic differentials normalized with respect to the $a_j$ 
cycles, i.e., $\sint_{a_j}\omega_k=\delta_{jk}$. The period matrix $\B$ is known to be
complex symmetric and with positive definite imaginary part. This matrix defines the full-rank lattice  $\Lambda_{\B} = \Z^g+\B\Z^g$ in $\C^g$.

An algebraic curve $\L$ defines the Abel map $\mathcal{A}: p \mapsto \sint_{p_0}^p \omega$, 
for a base point $p_0\in\L$. Although $\mathcal{A}(p)$ depends on the path, it is unique in $\C^g$ modulo $\Lambda_{\B}$. For simplicity, we use the notation $\sint_{p_1}^{p_2}:= \mathcal{A}(p_2) - \mathcal{A}(p_1) $. Notice that the difference is independent of the base point $p_0$. 

Consider the family of hyperelliptic curves
\begin{equation} \label{hyper}
    \L_\xi = \{ (x,y)\in\C^2 | y^2=(x-\xi)(x-\bar{\xi}) \prod_{j=1}^g (x-E_j)(x-F_j) \},
\end{equation}
parametrized by $\xi\in\C$, with distinct branch points $E_j,F_j\in\C$ and the pairwise condition that either $E_j=\bar{F}_j$ or $E_j,F_j\in\R$.

The basis cycles $\{a_j,b_j\}$ of the first homology group are chosen such that the action of the holomorphic involution $\tau(x,y)=(\bar{x},\bar{y})$ on them is 
\begin{equation} \label{eq:cycles_condition}
\begin{split}
    \tau(a_j) &= -a_j, \quad \text{for } j=1,\ldots, g,\\
    \tau(b_j) &= \left\{ \begin{array}{ll}
     b_j + \sum_{k\neq j} a_k, & \text{if } E_j=\bar{F}_j,\\
    b_j + \sum_{k} a_k, & \text{if } E_j,F_j\in\R. \end{array} \right.
\end{split}
\end{equation}
This guarantees that the real part of the period matrix $\B_\xi$ of $\L_\xi$ will have half-integer coefficients independent of $\xi$, as shown in Appendix \ref{appendix:cycles}.
Figure \ref{fig:cut_system_pq} shows cycles satisfying such conditions, where $a_j$ are the cycles encircling the branch cuts $[E_j,F_j]$ in counterclockwise direction and $b_j$ are those going from $[\xi,\bar{\xi}]$ to $[E_j,F_j]$ on the $+$-sheet. In the following, the $\pm$ scripts indicate whether we are considering the $+$ or $-$ covering sheet of $\L_\xi$. 

\begin{figure} [H]
    \centering
    \includegraphics[width=10cm]{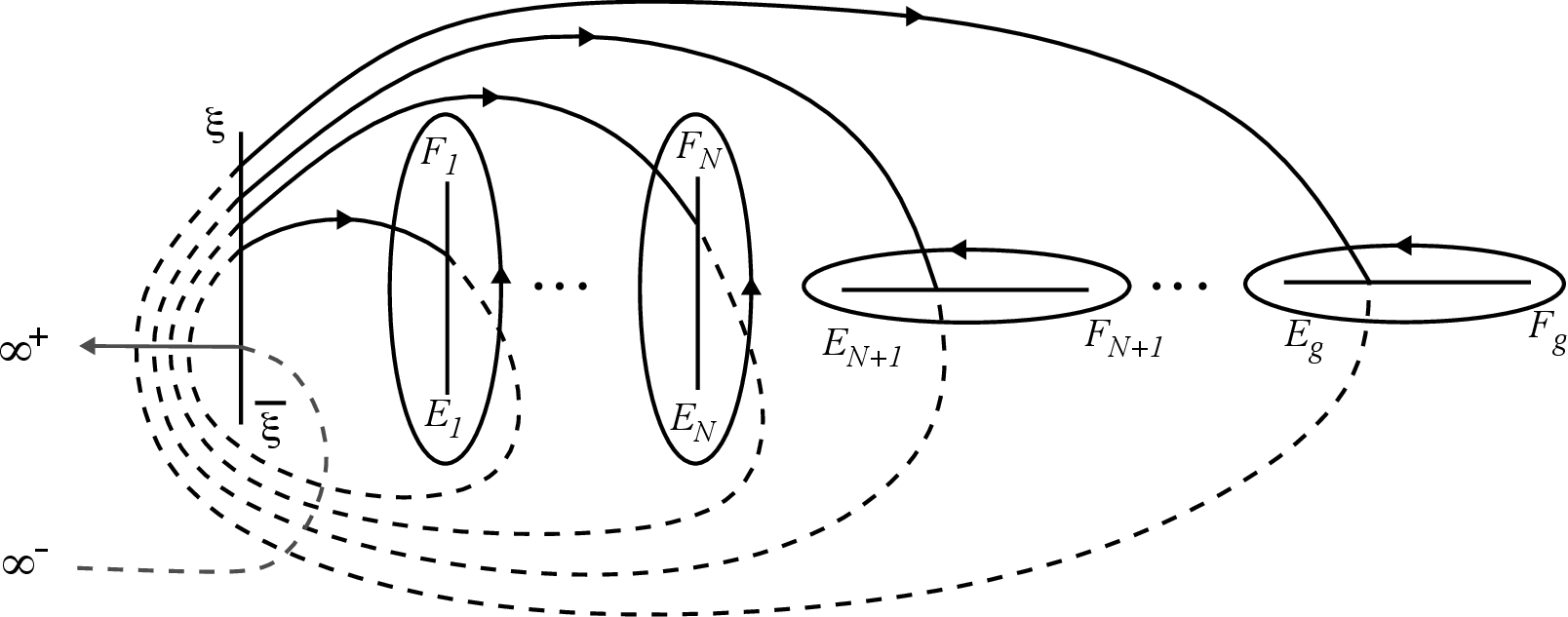}
    \caption{Choice of cycles.}
    \label{fig:cut_system_pq}
\end{figure}
The path $\gamma_\infty$ from the point $\infty^-$ to $\infty^+$ is chosen such that
\begin{equation} \label{eq:path_inf}
    \tau(\gamma_\infty) = \gamma_\infty - \sum_k a_k.
\end{equation}

\subsection{Theta functions}
The multi-dimensional theta function with characteristics $\vp\in\R^g$ and $\vq\in\C^g$ is defined by the series
\begin{equation} \label{theta_pq}
    \thetapq(\vz,\B):=\sum_{\vn\in\Zg} \exp\lp \pi \I \langle\vn+\vp,\B(\vn+\vp)\rangle + 2\pi \I \langle\vn+\vp,\vz + \vq \rangle \rp,
\end{equation}
for any $\vz\in\C^g$ and $\B\in\Hg$, where $\Hg$ is the space of $g\times g$ complex symmetric matrices with positive definite imaginary part. The latter implies that $\thetapq$ is an entire function in $\C^g$. This function is quasi-periodic with respect to the lattice $\Lambda_{\B}$ and its translation by the lattice vector $\vm\in\Zg$ is given by
\begin{equation} \label{theta_period}
    \thetapq(\vz+\vm,\B) = e^{2\pi\I \langle \vp, \vm\rangle } \thetapq(\vz,\B).
\end{equation}
Moreover, theta functions can be defined on a curve $\L$ via its period matrix $\B$ and the Abel map. Namely,
$$\Theta: \L \to \C, \quad p\mapsto \thetapq(\mathcal{A}(p)+v,\B),$$
for any $v\in\Cg$.

Summarizing, to every point $\xi\in\C$ we associate the period matrix $\B_\xi$ of the curve $\L_\xi$ as well as the Abel maps $\sint_{\xi}^{\infty^\pm}$, which enter as the arguments of the theta functions.
\subsection{Fay identity}
It is a relation between points on a compact Riemann surface $\mathcal{R}$, e.g., the hyperelliptic curves \eqref{hyper}. This functional relation is given in terms of the theta functions with characteristics defined by the period matrix of $\mathcal{R}$ and it holds on arbitrary points $a,b,c,d \in \mathcal{R}$, for all $\vz\in\Cg$.
\begin{equation} \label{fay_id}
\begin{split} 
    E(c,a)E(d,b) & \thetapq(\vz+\sint_b^c)\thetapq(\vz+\sint_a^d)+E(c,b)E(a,d) \thetapq(\vz+\sint_a^c) \thetapq(\vz+\sint_b^d) \\
    & = E(c,d)E(a,b)\thetapq(\vz)\thetapq(\vz+\sint_a^d+\sint_b^c),
\end{split}
\end{equation}
where $E(x,y)=\Theta_{\vp^*\vq^*}(\sint_y^x)/[h_\Delta(x)h_\Delta(y)]$ is the prime form, with the spinor $h_\Delta(x)$ satisfying 
$$h_\Delta^2(x)=\sum_{j=1}^g \frac{\partial \Theta_{\vp^*\vq^*}(0)}{\partial z_j} \omega_j(x),$$
see \cite{mumford, fay} for further discussions.
The theta function $\Theta_{\vp^*\vq^*}$ is required to have a non-singular odd half-integer characteristic, i.e., $2\vp^*,2\vq^*\in \Z^{g}/(2\Z^{g})$ such that $\Theta_{\vp^*\vq^*}(0)=0$ and $\nabla \Theta_{\vp^*\vq^*}(0)\neq 0$.

%
\section{Solution to the Ernst equation} \label{SEE}
In this section, we show that the Ernst potential \eqref{Ernst_pot0}
solves the Ernst equation \eqref{ernst_eq} with fixed arbitrary characteristics $\vp\in\R^g$, $\vq\in\C^g$ satisfying the reality conditions \eqref{reality}. The proof is divided in three steps. First, we show that the complex conjugate of \eqref{Ernst_pot0} can be expressed in terms of theta functions corresponding to the same period matrix $\B_\xi$. Second, we show that the real part of the Ernst potential can be simplified via the Fay identity. Third, we show that the proof presented in \cite{KKS}, which holds for $\vp=0$, can be extended to any $\vp\in\Rg$.

We are interested in expressing the complex conjugate of the Ernst potential in terms of theta functions of the same matrix $\B_\xi$, whose arguments must be represented as the Abel maps of points on $\L_\xi$, in order to use the Fay identity \eqref{fay_id}. This is given by the following proposition.
\begin{proposition} \label{prop:conjugate}
Let $\ernst(\xi,\bar{\xi})$ be the potential defined by \eqref{Ernst_pot0} with characteristics $\vp\in\R^g$, $\vq\in\C^g$ satisfying the reality conditions \eqref{reality}.
Then, its complex conjugate is 
\begin{equation}  \label{eq:conj_ernst_1}
    \overline{\ernst(\xi,\bar{\xi})} = e^{-\pi\I \sum_j p_j } \frac{\thetapq\left( \sint_{\bar{\xi}}^{\infty^+},\B_\xi \right)}{\thetapq\left( \sint_{\bar{\xi}}^{\infty^-},\B_\xi \right)}.
\end{equation}
\end{proposition}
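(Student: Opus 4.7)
The plan is to conjugate the series definition of $\thetapq$ and then rewrite the resulting expression so that both the argument and the modular parameter return to their original forms (with $\B_\xi$ and base point $\bar{\xi}$). This proceeds in three moves, using (i) a reindexing of the lattice sum, (ii) the action of the involution $\tau$ on the chosen basis of cycles, and (iii) the reality conditions \eqref{reality} on $\vq$.

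First, taking the complex conjugate of \eqref{theta_pq} and performing the substitution $\vn\mapsto-\vn$ (for which $\vp\in\R^g$ is used) yields the identity
\[
\overline{\thetapq(\vz,\B_\xi)} \;=\; \Theta_{-\vp,\,\bar{\vq}}\bigl(\bar{\vz},\,-\overline{\B_\xi}\bigr).
\]
From the cycle conditions \eqref{eq:cycles_condition} one then computes $\overline{\B_\xi}=2R-\B_\xi$, where $R$ is a constant real half-integer matrix depending only on the pattern of branch cuts: $R_{jk}=\tfrac12(\delta_{jk}-1)$ in the $E_j=\bar{F}_j$ case and $R_{jk}=-\tfrac12$ in the real branch-cut case. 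Thus $-\overline{\B_\xi}=\B_\xi-2R$, and the shift of the modular parameter by the symmetric integer matrix $-2R$ produces an elementary quadratic character in the theta sum that can be absorbed as an additive shift of the $\vq$-characteristic by $-R\vp$, plus a global $\vp$-dependent phase.

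Second, the antiholomorphic involution $\tau$ acts on the normalized differentials as $\overline{\tau^*\omega_k}=-\omega_k$ (forced by $\tau(a_j)=-a_j$ together with the $a$-normalization), and on paths from $\xi$ to $\infty^\pm$ through \eqref{eq:path_inf} and the $\tau(b_j)$ relation in \eqref{eq:cycles_condition}. Hence
\[
\overline{\sint_\xi^{\infty^\pm}} \;=\; -\sint_{\bar\xi}^{\infty^\pm} + \mathbf{v}_\pm, \qquad \mathbf{v}_\pm\in\Lambda_{\B_\xi},
\]
with $\mathbf{v}_\pm$ explicitly read off from $\tau(b_j)$ and \eqref{eq:path_inf}. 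Plugging this into the theta from the previous move and applying the quasi-periodicity \eqref{theta_period} absorbs the lattice correction at the cost of further phases linear in $\vp$. A final reindexing $\vn\mapsto-\vn$ — equivalently, the identity $\Theta_{-\vp,\vq'}(-\vw,\B)=\Theta_{\vp,-\vq'}(\vw,\B)$ — restores the characteristic $\vp$.

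Third, collecting everything, the numerator and the denominator of $\overline{\ernst(\xi,\bar{\xi})}$ each take the form $\thetapq(\sint_{\bar\xi}^{\infty^\pm},\B_\xi)$ times phases. The reality conditions \eqref{reality} are precisely those that ensure the $\vq$-characteristic produced by the two previous moves coincides with the original $\vq$ modulo $\Z^g$, in both branch-cut cases: the term $-R\vp$ from the modular shift is matched by the explicit form of $\Re(\vq)$ prescribed in \eqref{reality}, while the $-\tfrac14$ offset in the real branch-cut case accounts for the additional $\sum_k a_k$ contribution in \eqref{eq:path_inf} relative to \eqref{eq:cycles_condition}. Once the characteristics match, the $\pm$-dependent phases cancel in the ratio, and the residual constant phase combines with the conjugate of $e^{-\pi\I\sum_j p_j}$ to reproduce the prefactor in \eqref{eq:conj_ernst_1}.

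The main obstacle is the phase bookkeeping in the second and third moves: tracking the contributions of $R$, the lattice vectors $\mathbf{v}_\pm$, and the character flips, and verifying that the somewhat asymmetric conditions \eqref{reality} are exactly what is needed for all these phases to collapse into the single factor $e^{-\pi\I\sum_j p_j}$. The remaining ingredients — the $\tau$-action on cycles, paths and differentials, and the reindexing identities for the theta series — are essentially mechanical.
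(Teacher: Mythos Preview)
Your plan is correct and uses the same ingredients as the paper: conjugate the theta series, exploit that $-\overline{\B_\xi}=\B_\xi-2R$ with $R$ half-integral, relate the conjugated Abel maps to Abel maps based at $\bar\xi$ up to an integer lattice shift, and then invoke the reality conditions \eqref{reality} together with \eqref{theta_period} to collect the phases.

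The paper organizes the computation a bit differently, in a way that sidesteps exactly the ``phase bookkeeping'' you flag as the main obstacle. Rather than carrying the modified characteristic $(-\vp,\bar\vq)$ and the modular shift by $-2R$ through the argument and only matching characteristics at the end, the paper first packages the conjugation and the modular transformation into a single identity
\[
\overline{\thetapq(\vz,\B_\xi)} \;=\; \alpha\,\thetapq\bigl(-\bar\vz - 2\Re(\vq+\B_\xi\vp)+\diag(R),\,\B_\xi\bigr),
\]
and observes that \eqref{reality} is \emph{equivalent} to the vanishing of the shift $-2\Re(\vq+\B_\xi\vp)+\diag(R)$. So the reality conditions are spent immediately, and from then on one only has to handle $-\overline{\sint_\xi^{\infty^\pm}}$. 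For this the paper also takes a slightly different route than your $\tau$-action on paths: it uses $\bar\vz=2\Re(\vz)-\vz$ together with $\sint_\xi^{\infty^+}=-\sint_\xi^{\infty^-}$ and the explicit values $\Re(\sint_{\infty^-}^{\infty^+})=\sint_{\bar\xi}^\xi=(\tfrac12,\dots,\tfrac12)$, so that only a single integer lattice shift by $(1,\dots,1)$ appears (in the numerator), producing the factor $e^{-2\pi\I\sum_j p_j}$ that combines with $e^{+\pi\I\sum_j p_j}$ to give \eqref{eq:conj_ernst_1}. (So the $\pm$-phases do not cancel in the ratio; rather, the denominator carries no lattice correction and the numerator carries the full phase.) One small slip in your sketch: the $-\tfrac14$ in \eqref{reality} comes from $\tfrac12\diag(R)$, i.e., from the extra $a_j$ in $\tau(b_j)$ in \eqref{eq:cycles_condition} for real branch cuts, not from \eqref{eq:path_inf}.
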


\begin{proof}
From the definition of the multi-dimensional theta function \eqref{theta_pq}, it can be observed that
$$\overline{\thetapq(\vz,\B)} = \alpha \cdot \thetapq(-\bar{\vz}-2\Re(\vq+\B \vp)+\diag(\Re(\B)),\B),$$
for all $\vz\in\Cg$ and for any matrix $\B\in\Hg$ satisfying the condition $2\Re(\B)\in M_{g\times g}(\Z)$, where $\alpha\in\C$ is a constant independent of $\vz$ (see Appendix \ref{App_conj}). This condition is satisfied by the period matrices $\B_\xi$ of hyperelliptic curves of the form \eqref{hyper} with the choice of cycles shown in Figure \ref{fig:cut_system_pq}, as shown in Appendix \ref{appendix:cycles}. 

Moreover, the conditions \eqref{reality} on the $\vp$, $\vq$ characteristics are equivalent to the vanishing of the term $-2\Re(\vq+\B \vp)+\diag(\Re(\B))$. Then,
$$\overline{\thetapq(\vz,\B_\xi)} = \alpha \cdot \thetapq(-\bar{\vz},\B_\xi),$$
which implies
\begin{equation} \label{eq:conj_ernst_2}
    \overline{\ernst(\xi,\bar{\xi})} = e^{\pi\I \sum_j p_j } \frac{\thetapq\left( - \overline{\sint_{\xi}^{\infty^+}},\B_\xi \right)}{\thetapq\left( - \overline{\sint_{\xi}^{\infty^-}},\B_\xi \right)}.
\end{equation}

The next step is expressing the complex conjugates $\overline{\sint_\xi^{\infty^\pm}}$ in terms of the Abel maps of $\bar{\xi}$. Notice that $\bar{\vz}=2\Re(\vz)-\vz$ for any $\vz\in\C^g$ and in particular,
\begin{align*}
   \overline{\sint_\xi^{\infty^\pm}} = 2\Re \left( \sint_\xi^{\infty^\pm} \right) - \sint_\xi^{\infty^\pm}.
\end{align*}
On the other hand, the integrals $\sint_\xi^{\infty^\pm}$ can be written in terms of $\sint_{\infty^-}^{\infty^+}$, whose real part is given explicitly by $\Re(\sint_{\infty^-}^{\infty^+})=(\frac{1}{2},\ldots,\frac{1}{2})$ in Appendix \ref{appendix:cycles}.  Notice that $\sint_\xi^{\infty^+}=-\sint_\xi^{\infty^-} $, since the paths of the integrals $\sint_\xi^{\infty^+}$ and $\sint_\xi^{\infty^-}$ have the same projection on $\C\P^1$. Thus,

$$\sint_{\infty^-}^{\infty^+}=\int_{\infty^-}^{\xi}+\int_{\xi}^{\infty^+} = - \int^{\infty^-}_{\xi}+\int_{\xi}^{\infty^+} = \pm 2 \int_{\xi}^{\infty^\pm}.$$

Implying,
\begin{equation} \label{eq:conj_abel}
\overline{\sint_\xi^{\infty^\pm}} = - \sint_\xi^{\infty^\pm} \pm \Re( \int_{\infty^-}^{\infty^+} ) .
\end{equation}

The integral $\int_{\bar{\xi}}^\xi$ is one half of the integral along the cycle encircling the cut $[\bar{\xi},\xi]$ in clockwise direction, which is equivalent to $\sum_k a_k$. Then, 
\begin{equation*}
    \sint_{\bar{\xi}}^\xi = \frac{1}{2} \sum_k \int_{a_k} = \Re( \int_{\infty^-}^{\infty^+} ) = (\frac{1}{2},\ldots,\frac{1}{2}).
\end{equation*}
Thus, substituting this value and \eqref{eq:conj_abel} into \eqref{eq:conj_ernst_2} we obtain
\begin{equation} 
    \overline{\ernst(\xi,\bar{\xi})} = e^{\pi\I \sum_j p_j } \frac{\thetapq\left( \sint_{\bar{\xi}}^{\infty^+}-2\sint_{\bar{\xi}}^\xi,\B_\xi \right)}{\thetapq\left( \sint_{\bar{\xi}}^{\infty^-},\B_\xi \right)}.
\end{equation}

Finally, the form \eqref{eq:conj_ernst_1} of the complex conjugate of the Ernst potential follows by considering the translation
\begin{equation*}
    \thetapq\left( \sint_{\bar{\xi}}^{\infty^+} - 2 \sint_{\bar{\xi}}^\xi,\B_\xi \right) = e^{ -2\pi\I \langle \vp, 2 \sint_{\bar{\xi}}^\xi \rangle } \thetapq\left( \sint_{\bar{\xi}}^{\infty^+} ,\B_\xi \right)
\end{equation*}
by the lattice vector $2 \sint_{\bar{\xi}}^\xi = (1,\ldots,1)$, which is obtained with the property \eqref{theta_period} of theta functions.
\end{proof}

This proposition implies that the real part of the Ernst potential \eqref{Ernst_pot0} can be expressed in a simple form, since all the involved theta functions are now in terms of the same period matrix $\B_\xi$, which allows us to use Fay's identity \eqref{fay_id}. For ease of readability, we omit the second argument $\B_\xi$ in the sequel. Thus, using \eqref{eq:conj_ernst_1} from Proposition \ref{prop:conjugate}, we obtain
\begin{align*}
    \ernst(\xi,\bar{\xi})+ \overline{\ernst(\xi,\bar{\xi})} = e^{-\pi \I \sum_j p_j } \left[ \frac{ \thetapq(\sint^{\infty^+}_{\xi}) \thetapq(\sint^{\infty^-}_{\bar{\xi}}) + \thetapq(\sint^{\infty^+}_{\bar{\xi}}) \thetapq(\sint^{\infty^-}_{\xi}) }{ \thetapq(\sint^{\infty^-}_{\xi}) \thetapq(\sint^{\infty^-}_{\bar{\xi}})}  \right],
\end{align*}
and considering Fay's identity \eqref{fay_id} with $\vz=0$, $a=\xi$, $b=\bar{\xi}$, $c=\infty^-$, $d=\infty^+$; the lemma
$$ \frac{E(\infty^-,\xi)E(\infty^+,\bar{\xi})}{E(\infty^-,\bar{\xi})E(\xi,\infty^+)} = 1,$$
which is proven in \cite{KKS}; and the property $E(x,y)=-E(y,x)$ of the prime form, we observe that
\begin{align*}
    \thetapq(\sint^{\infty^+}_{\xi}) \thetapq(\sint^{\infty^-}_{\bar{\xi}}) + &\thetapq(\sint^{\infty^+}_{\bar{\xi}}) \thetapq(\sint^{\infty^-}_{\xi}) = \\
    & \frac{E(\infty^-,\infty^+)E(\xi,\bar{\xi})}{E(\xi,\infty^-)E(\bar{\xi},\infty^+)} \thetapq(0)\thetapq(\sint_\xi^{\infty^+}+\sint_{\bar{\xi}}^{\infty^-}).
\end{align*}
Therefore, the real part of the Ernst potential is
\begin{equation} \label{ernst_real}
    \Re(\ernst(\xi,\bar{\xi})) = Q \cdot e^{-\pi \I \sum_j p_j } \frac{\thetapq(0) \thetapq(\sint_{\bar{\xi}}^\xi)}{\thetapq(\sint_{\xi}^{\infty^-}) \thetapq(\sint_{\bar{\xi}}^{\infty^-})},
\end{equation}
where 
$$Q = Q(\xi,\bar{\xi}) = \frac{1}{2} \frac{E(\infty^-,\infty^+)E(\xi,\bar{\xi})}{E(\xi,\infty^-)E(\bar{\xi},\infty^+)} = \frac{\Theta(\sint_\xi^{\infty^-}) \Theta(\sint_{\bar{\xi}}^{\infty^-})}{\Theta(0) \Theta(\sint_\xi^{\bar{\xi}})}.$$
The latter equality is obtained from the fact that $\ernst \equiv 1$ if $\vp,\vq=0$.

Finally, with formula \eqref{ernst_real}, we show that the potential \eqref{Ernst_pot0} solves the Ernst equation.

\begin{theorem}
    Let $\ernst(\xi,\bar{\xi})$ be the potential defined by \eqref{Ernst_pot0} with fixed arbitrary characteristics $\vp\in\Rg$, $\vq\in\Cg$ satisfying the reality conditions \eqref{reality}. Then, it solves the Ernst equation \eqref{ernst_eq} in complex coordinates, which can be written as
\begin{equation} \label{ernst_eq_2}
    (\ernst+\overline{\ernst}) \Delta \ernst = 8 \ernst_\xi \ernst_{\bar{\xi}}.
\end{equation}
\end{theorem}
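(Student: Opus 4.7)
The plan is to adapt the proof in \cite{KKS} of the $\vp = 0$ case to arbitrary $\vp\in\Rg$. Thanks to Proposition~\ref{prop:conjugate} and the simplification \eqref{ernst_real}, all three of $\ernst$, $\overline{\ernst}$, and $\ernst+\overline{\ernst}$ are already expressed via theta functions of the same period matrix $\B_\xi$ evaluated at Abel-map combinations on $\L_\xi$. The Ernst equation \eqref{ernst_eq_2} is thereby reduced to an identity among theta functions with fixed characteristics $(\vp,\vq)$.

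First I would compute the derivatives $\ernst_\xi$, $\ernst_{\bar\xi}$, and $\Delta\ernst = 4\ernst_{\xi\bar\xi}$ by differentiating \eqref{Ernst_pot0}. Since $\xi$ and $\bar\xi$ enter $\L_\xi$ as branch points, three ingredients are needed: the Rauch variational formulas giving $\partial\B_\xi/\partial\xi$ and $\partial\B_\xi/\partial\bar\xi$ in terms of the normalized holomorphic differentials $\omega_j$; the derivatives $\partial_\xi \sint_\xi^{\infty^\pm}$, again obtained from $\omega_j$ evaluated at $\xi$; and the heat equation $\partial\thetapq/\partial B_{jk} = \tfrac{1+\delta_{jk}}{4\pi\I}\,\partial^2\thetapq/\partial z_j\partial z_k$, which holds for arbitrary characteristics and lets one trade differentiation in the period matrix for second derivatives in the first argument.

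Substituting these into \eqref{ernst_eq_2} yields a rational identity in $\thetapq$ and its first derivatives, evaluated at various Abel maps on $\L_\xi$. The plan is then to reduce this identity, using Fay's identity \eqref{fay_id} together with its degenerations when two of the points $a,b,c,d$ coincide, to the same algebraic relation that is verified in \cite{KKS}. The overall phase factor $e^{-\pi\I\sum_j p_j}$ multiplies both $\ernst$ and $\overline{\ernst}$ in the same way (by Proposition~\ref{prop:conjugate}), and each side of \eqref{ernst_eq_2} is homogeneous of degree two in it, so it drops out of the equation.

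The main obstacle, compared with \cite{KKS}, is tracking the phase factors $e^{2\pi\I\langle\vp,\vm\rangle}$ produced by the quasi-periodicity \eqref{theta_period} whenever arguments of $\thetapq$ are shifted by a lattice vector $\vm\in\Lambda_{\B_\xi}$. Such phases are absent when $\vp=0$, and they are controlled here precisely by the reality conditions \eqref{reality}, which were tailored in Proposition~\ref{prop:conjugate} so that the shifts combine consistently. The remaining work is therefore the careful, essentially bookkeeping, verification that the theta-function manipulations of \cite{KKS} go through once the characteristic $\vp$ is allowed to be nonzero, using that both Fay's identity and the heat equation are valid for any characteristic.
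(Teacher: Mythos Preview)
Your proposal is correct and follows essentially the same route as the paper: reduce the complex conjugate via Proposition~\ref{prop:conjugate}, use \eqref{ernst_real} for $\ernst+\overline{\ernst}$, and then invoke the \cite{KKS} derivative computations (which already employ Fay's identity for arbitrary characteristics) to verify \eqref{ernst_eq_2}. The paper's version is slightly more streamlined in that it observes the phase factor $e^{-\pi\I\sum_j p_j}$ is a \emph{constant} (independent of $\xi$, by \eqref{eq:cycles_condition}), so the \cite{KKS} formulas for $\ernst_\xi$, $\ernst_{\bar\xi}$, $\Delta\ernst$ carry over verbatim up to that factor---there are no further quasi-periodicity phases to track once \eqref{ernst_real} is in hand, so your anticipated ``main obstacle'' does not actually arise.
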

\begin{proof}
Since the phase factor in \eqref{Ernst_pot0} is independent of $\xi$ if the basis cycles of the first homology group are chosen such that they satisfy conditions \eqref{eq:cycles_condition}, the derivatives $\ernst_\xi$, $\ernst_{\bar{\xi}}$ and the Laplacian $\Delta \ernst$ are just those shown in \cite{KKS} multiplied by this phase factor. Namely,
\begin{align*}
    \ernst_\xi & = \frac{1}{2} c_2(\infty^-,\xi,\infty^+) e^{-\pi \I \sum_j p_j }  \frac{\thetapq(0)}{\thetapq^2(\sint_\xi^{\infty^-})} D_\xi \thetapq(0),\\
    \ernst_{\bar{\xi} } & = \frac{1}{2} c_2(\infty^-,\bar{\xi},\infty^+) e^{-\pi \I \sum_j p_j } \frac{\thetapq(\sint_{\bar{\xi}}^\xi)}{\thetapq^2(\sint_\xi^{\infty^-})} D_{\bar{\xi}} \thetapq(\textstyle{\sint}^{\bar{\xi}}_\xi),
\end{align*}
\begin{align*}
    \Delta\ernst = \frac{1}{Q} c_2(\infty^-,\xi,\infty^+) c_2(\infty^-,\bar{\xi},\infty^+) e^{- \pi \I \sum_j p_j } \frac{\thetapq(\sint_{\bar{\xi}}^{\infty^-})}{\thetapq^3(\sint_{\xi}^{\infty^-})} D_{\bar{\xi}} \thetapq(\textstyle{\sint}^{\bar{\xi}}_\xi ) D_\xi \thetapq(0),
\end{align*}
where the coefficients $c_2(\infty^-,\xi,\infty^+)$, $c_2(\infty^-,\bar{\xi},\infty^+)$ are functions defined in terms of prime forms \cite{KKS}. Therefore, with these values for $\ernst_\xi$, $\ernst_{\bar{\xi}}$ and $\Delta \ernst$, and considering the form \eqref{ernst_real} for $\Re(\ernst)$, it follows that the potential \eqref{Ernst_pot0} solves the Ernst equation.
\end{proof}

\begin{remark}
If the exponential factor in \eqref{Ernst_pot0} is not taken into account, this class of solutions is only valid if $\sum_j p_j$ is an integer, e.g., $\vp=0$. Otherwise, formula \eqref{ernst_real} does not hold.
\end{remark}

\begin{remark}
An equivalent form of the class of solutions \eqref{Ernst_pot0} is 
\begin{equation} 
    \ernst(\xi,\bar{\xi}) = e^{-\pi\I \sum_j p_j } \frac{\thetapq\left( \sint_\xi^{\infty^+}+\frac{1}{2}\Delta,\B_\xi \right)}{\thetapq\left( \sint_\xi^{\infty^-}+\frac{1}{2}\Delta,\B_\xi \right)},
\end{equation}
with fixed arbitrary $\vp\in\Rg$, $\vq\in\Cg$ subject to the reality condition $\Re(\vq)+R\vp = 0$, where $R$ is given by \eqref{eq:real_B} and $\Delta:=\diag(R)$. 
\end{remark}

%
%
\appendix
\section{Complex conjugate of theta functions} \label{App_conj}
\begin{proposition}
    Let $\B$ be any Riemann matrix $\B\in\siegel$ whose real part $R:=\Re(\B)$ has half-integer coefficients. Then, the complex conjugate of its associated theta function with characteristics \eqref{theta_pq} can be written as
\begin{align} \label{eq:conj_theta}
    \overline{\thetapq(\vz,\B)} = \alpha \cdot \thetapq(-\bar{\vz}-2\Re(\vq+\B \vp)+\diag(R),\B),
\end{align}
where $\alpha\in\C$ is a constant that does not depend on $\vz$. 
\end{proposition}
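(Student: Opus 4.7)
The plan is to conjugate the defining series \eqref{theta_pq} term by term, use $\bar\B = 2R-\B$ to recover the quadratic weight $e^{\pi\I\langle\vn+\vp,\B(\vn+\vp)\rangle}$ in the exponent, and match the leftover factors against $\thetapq$ at a shifted argument. Since $\vn$ and $\vp$ are real, termwise conjugation gives
\begin{equation*}
    \overline{\thetapq(\vz,\B)} = \sum_{\vn\in\Zg} \exp\lp -\pi\I\langle\vn+\vp,\bar\B(\vn+\vp)\rangle - 2\pi\I\langle\vn+\vp, \bar\vz+\bar\vq\rangle\rp,
\end{equation*}
and the substitution $\bar\B = 2R-\B$ rewrites this as $\sum_{\vn} e^{\pi\I\langle\vn+\vp,\B(\vn+\vp)\rangle}\,e^{-2\pi\I\langle\vn+\vp,R(\vn+\vp)\rangle - 2\pi\I\langle\vn+\vp,\bar\vz+\bar\vq\rangle}$ without any reindexing of the summation. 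If the second exponential could be made linear in $\vn$ modulo $\vn$-independent constants, the whole series would collapse to a $\thetapq$-series at a shifted argument.

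The main obstacle will be the quadratic-in-$\vn$ contribution $e^{-2\pi\I\langle\vn, R\vn\rangle}$. The key identity I plan to establish is
\begin{equation*}
    \langle\vn, R\vn\rangle \equiv \langle \diag(R), \vn\rangle \pmod{1} \qquad \text{for all } \vn\in\Zg.
\end{equation*}
This is the only place the half-integer hypothesis on $R$ is used: the off-diagonal part $\sum_{i<j}(2R_{ij})n_in_j$ is integer-valued because $2R_{ij}\in\Z$, while each diagonal term satisfies $R_{ii}n_i^2 \equiv R_{ii}n_i \pmod 1$ since $n_i(n_i-1)$ is always even and $2R_{ii}\in\Z$. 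Via this congruence, the unwanted Gaussian factor collapses to $e^{-2\pi\I\langle\diag(R),\vn\rangle}$, which is linear in $\vn$.

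With the congruence in hand, the remainder is routine bookkeeping. Expanding $\langle\vn+\vp,R(\vn+\vp)\rangle$ using symmetry of $R$ and the equality $\Re(\B\vp)=R\vp$ (which holds because $\vp$ is real), and matching the linear-in-$\vn$ coefficients against $2\pi\I\langle\vn+\vp,\vz'+\vq\rangle$ in the definition \eqref{theta_pq}, one reads off the shift $\vz' = -\bar\vz - 2\Re(\vq+\B\vp) - \diag(R)$ together with an overall $\vn$-independent constant that will serve as $\alpha$. Since $2R_{ii}\in\Z$, the vector $2\diag(R)$ lies in $\Zg$, so $\vz'$ differs from the argument $-\bar\vz - 2\Re(\vq+\B\vp) + \diag(R)$ in \eqref{eq:conj_theta} by a lattice vector, and the quasi-periodicity \eqref{theta_period} absorbs this last discrepancy into $\alpha$, completing the proof.
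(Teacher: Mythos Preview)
Your proposal is correct. The approach differs from the paper's in presentation rather than in substance, but the difference is worth noting. The paper first packages the conjugated series as $\thetapq(-\bar{\vz}-2\Re(\vq),\B-2R)$ and then invokes the modular transformation of theta functions under the integer symplectic matrix $\begin{psmallmatrix}\mathbb{I}_g & -2R\\ 0 & \mathbb{I}_g\end{psmallmatrix}$ as a black box to pass from $\B-2R$ back to $\B$. You instead unwind that transformation by hand: your key congruence $\langle\vn,R\vn\rangle\equiv\langle\diag(R),\vn\rangle\pmod 1$ for half-integer symmetric $R$ is precisely the arithmetic content of that particular modular transformation, and the rest of your bookkeeping reproduces the induced shift in the argument. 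So the two proofs are really the same computation seen at different levels of abstraction: the paper's version is shorter if the reader already has the transformation formula at hand, while yours is self-contained and makes transparent exactly where and how the half-integrality hypothesis enters.
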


\begin{proof}
In general, from the definition \eqref{theta_pq} of multi-dimensional theta functions with characteristics $\vp\in\R^g$ and $\vq\in\C^g$, it can be observed that their complex conjugate can be written as
\begin{align*} 
    \overline{\thetapq(\vz,\B)} = \thetapq(-\bar{\vz}-\bar{\vq}-\vq,-\bar{\B}) = \thetapq(-\bar{\vz}-2\Re(\vq),\B-2R),
\end{align*}
for all $\vz\in\C^g$ and $\B\in\Hg$. Moreover, since $R$ is assumed to have half-integer coefficients,
$$\begin{pmatrix}
   \mathbb{I}_g & -2R \\
    0 & \mathbb{I}_g
\end{pmatrix}\in \mathrm{Sp}(2g,\Z), $$ 
implying that $\B-2R$ is symplectically equivalent to $\B$. Thus, using the modular transformation of theta functions we obtain \eqref{eq:conj_theta}.
\end{proof}

\section{Choice of cycles} \label{appendix:cycles}
We choose cycles such that the action of the holomorphic involution $\tau(x,y)=(\bar{x},\bar{y})$ on them satisfies \eqref{eq:cycles_condition}. Figure \ref{fig:cut_system_pq} shows an example of such cycles. These conditions imply that the real part of the period matrices of $\L_\xi$ are half-integers and that they do not depend on the parameter $\xi\in\C$. Indeed, due to the condition $\tau(a_j) = -a_j$, the action $\tau^*$ of $\tau$ on the normalized differentials is $\tau^*(\omega_k)=-\bar{\omega}_k$ (see \cite{BB}). Therefore, the complex conjugate of the period matrices are of the form
\begin{align*}
    \bar{\B}_{jk} &= \int_{b_j} \bar{\omega}_k = - \int_{b_j} \tau^*(\omega_k) = -\int_{\tau(b_j)} \omega_k ,\\
    &= -\int_{b_j} \omega_k - \sum_{l} \int_{a_l} \omega_k + \sigma_j \int_{a_j} \omega_k,\\
    &= -\B_{jk} - (1-\sigma_j \cdot \delta_{jk}),
\end{align*}
where $\sigma_j=1$ if $E_j=\bar{F}_j$ and $\sigma_j=0$ if $E_j,F_j\in\R$. Thus, the components of the real part of the period matrices $\B_\xi$, which we denote $R=\Re(\B_\xi)$, are independent of $\xi$ and their explicit values are
\begin{equation} \label{eq:real_B}
    R_{ij} = \left\{ \begin{array}{ll}
        0, & \text{if } i=j \text{ and } E_j=\bar{F}_j, \\
        -\frac{1}{2}, & \text{otherwise}.
    \end{array} \right.
\end{equation}

Notice that the reality conditions \eqref{reality} can be equivalently expressed as
\begin{equation*}
    \Re(\vq)+R\cdot \vp = \frac{1}{2} \diag(R).
\end{equation*}

On the other hand, since the path $\gamma_p$ for the integral $\sint_{\infty^-}^{\infty^+}\omega$ is chosen such that the action of the holomorphic involution $\tau$ is given by \eqref{eq:path_inf}, the complex conjugate of $\sint_{\infty^-}^{\infty^+}\omega_j$ is
\begin{align*}
    \overline{ \sint_{\infty^-}^{\infty^+} \omega_j} =  \sint_{\gamma_\infty} \bar{\omega}_j &= - \sint_{\gamma_\infty} \tau^*(\omega_j) = - \sint_{\tau(\gamma_\infty)} \omega_j,\\
    &=- \sint_{\gamma_\infty} \omega_j + \sum_k \sint_{a_k} \omega_j = - \sint_{\infty^-}^{\infty^+} \omega_j +  1,
\end{align*}
which implies
\begin{equation} \label{eq:real_infty_pm}
    \Re \left( \sint_{\infty^-}^{\infty^+} \omega \right) = \frac{1}{2} \sum_k \sint_{a_k} \omega = \left( \frac{1}{2}, \ldots, \frac{1}{2} \right).
\end{equation}

%
%
\bibliographystyle{amsplain}

\end{document}